\newsavebox\myboxA
\newsavebox\myboxB
\newlength\mylenA
\newcommand*\xoverline[2][0.75]{%
    \sbox{\myboxA}{$\m@th#2$}%
    \setbox\myboxB\null
    \ht\myboxB=\ht\myboxA%
    \dp\myboxB=\dp\myboxA%
    \wd\myboxB=#1\wd\myboxA
    \sbox\myboxB{$\m@th\overline{\copy\myboxB}$}
    \setlength\mylenA{\the\wd\myboxA}
    \addtolength\mylenA{-\the\wd\myboxB}%
    \ifdim\wd\myboxB<\wd\myboxA%
       \rlap{\hskip 0.5\mylenA\usebox\myboxB}{\usebox\myboxA}%
    \else
        \hskip -0.5\mylenA\rlap{\usebox\myboxA}{\hskip 0.5\mylenA\usebox\myboxB}%
    \fi}
\def\to{\rightarrow}
\def \lket {\left|}
\def \rket {\right\rangle}
\newcommand{\ket}[1]{\lket\mspace{0.5mu} #1 \mspace{0.5mu}\rket}
\def\cz{\sansserif{CZ}}
\def\H{\mathcal{H}}
\def\N{\mathcal{N}}
\def\S{\mathcal{S}}
\def\pa{\sansserif{A}}
\def\pb{\sansserif{B}}
\def\pc{\sansserif{C}}
\def\i{\sansserif{I}}
\def\pp{\sansserif{P}}
\def\pq{\sansserif{Q}}
\def\px{\sansserif{X}}
\def\py{\sansserif{Y}}
\def\pz{\sansserif{Z}}
\def\pq{\sansserif{Q}}
\def\cnot{\sansserif{CNOT}}
\newtheorem*{theorem*}{Theorem}
\newcommand{\sansserif}[1]{%
  \ifmmode
  \mathsf{#1}%
  \else
   \textsf{#1}%
  \fi
}
\begin{document}
\title{Topological wormholes}
\author{Anirudh Krishna}
\affiliation{
	D\'epartment de physique \& Institut Quantique,
	Universit\'e de Sherbrooke,
	Sherbrooke, Qu\'ebec, Canada J1K 2R1}
\author{David Poulin}
\affiliation{
	D\'epartment de physique \& Institut Quantique,
	Universit\'e de Sherbrooke,
	Sherbrooke, Qu\'ebec, Canada J1K 2R1}
\begin{abstract}
	Locality plays a fundamental role in quantum computation but also severely restricts our ability to store and process quantum information.
	We argue that this restriction may be unwarranted and re-examine quantum error correcting codes.
	We proceed to introduce new defects on the surface code called wormholes.
	These novel defects entangle two spatially separated sectors of the lattice.
	When anyonic excitations enter the mouth of a wormhole, they emerge through the other mouth.
	Wormholes thus serve to connect two spatially separated sectors of a flat, $2$D lattice.
	We show that these defects are capable of encoding logical qubits and can be used to perform all gates in the Clifford group.
\end{abstract}
\date{\small \today}
\maketitle

\emph{Introduction.---}
Locality plays a central role in condensed matter and quantum information science.
This is exemplified by Kitaev's toric code \cite{kitaev2003fault}, its variants \cite{bravyi1998quantum} and the color code \cite{bombin2006topological}.
These physical models are defined by Hamiltonians composed entirely of local terms of low weight and yet display topological order -- the ground states of the system cannot be discerned using local measurements.
From the perspective of quantum computation, this means that the code space is robust to local errors as perturbations must collude over a large distance to induce a logical error.
Furthermore, the locality of the Hamiltonian is a boon for experimental realizations of quantum error correction and considerably simplifies the syndrome extraction circuits.
For this and many other reasons, such models are a promising blueprint for scalable quantum computers \cite{FowlerEtAl12}.

Locality however poses severe restrictions on storing and processing encoded quantum information \cite{bravyi2010tradeoffs,bravyi2013classification,webster2018braiding}.
Architectures based on Nitrogen Vacancy (NV) centers and ion traps have softer constraints on coupling qubits that are not adjacent \cite{nickerson2013topological}.
With the advent of deterministic methods to share entanglement non-locally on superconducting qubit architectures \cite{kurpiers2018deterministic,axline2018demand,campagne2018deterministic}, the strict restriction of locality in the design of quantum error correcting codes may be unwarranted.
Furthermore, modeling qubits as point-like objects may not apply to physical implementations which use extended objects such as a resonator to store quantum information \cite{katzgraber2014glassy}.
In such an architecture, the geometry of qubit couplings may not be suitably represented by a two-dimensional grid.
For these reasons, we have chosen to bend the rules of locality and re-examine quantum error correcting codes.
We emphasize that we are not addressing the no-go result of Bravyi et al. \cite{bravyi2010tradeoffs} which places restrictions on two-dimensional quantum memories; nor are we violating the Bravyi-Koenig bound \cite{bravyi2013classification} which limits what gates can be performed fault tolerantly on two-dimensional codes.

In this article, we demonstrate how introducing a small amount of non-locality gives rise to novel defects on the toric code which we call \emph{wormholes}.
These defects possess two mouths that connect two spatially separated sectors of the lattice that we refer to as the mouths of the wormhole.
The name is motivated by considering the movement of lattice excitations called anyons.
If an anyon were to enter the mouth of a wormhole, it emerges via the other.
In turn, this means that two anyons can be spatially separated by an arbitrary distance and still share entanglement via the wormhole.
Furthermore, we shall show that we can use wormholes to encode logical information.
We then demonstrate how we can perform all Clifford operations on the logical information.
This can be seen as a unification of previous defect based encoding schemes combining puncture and twist defects \cite{raussendorf2007fault,Bombin10,Bombin11,BrownEtAl17}.

If we eschew locality entirely, we can obtain quantum Low Density Parity Check (LDPC) codes that are capable of encoding a number of qubits that grows with the block size \cite{FreedmanMeyerLuo02,GuthLubotzky14,TillichZemor09}.
However, engineering such connections may be infeasible with current technology.
Therefore, this work can additionally be seen as a proposal for codes in the spectrum between entirely local codes on the one end and quantum LDPC codes on the other.

The techniques proposed here to perform gates generalize to a powerful class of quantum LDPC codes called hypergraph product codes.
In a companion paper \cite{krishna2019fault}, we outline how to generalize the techniques presented here to perform gates on a certain class of LDPC codes called hypergraph product codes \cite{TillichZemor09,LeverrierTillichZemor15}.
This is the first technique to fault tolerantly perform gates on this family of error correcting codes.

Lastly, our construction could contribute to the discussion on the connection between entanglement and the geometry of spacetime \cite{pastawski2015holographic}.
We first note that the geometry that an anyon experiences is dictated by the entanglement in the underlying spin substrate.
This is reminiscent of the ER = EPR conjecture in quantum gravity \cite{susskind2016copenhagen}.
Secondly, if we define the entropy of a wormhole as the entanglement entropy between two mouths, then we find that it scales with the size of the boundary of the mouths rather than the size of the mouths.
This mirrors the Bekenstein entropy \cite{bekenstein1973black} which also scales proportionally to the area of a black hole.

\emph{Background and notation. ---}
The toric code \cite{bravyi1998quantum} is a quantum error correcting code defined on a square lattice with periodic boundary conditions.
The qubits are placed on the edges of the lattice and the vertices and plaquettes of the lattice serve to define a local Hamiltonian $\H$.
We introduce some notation at this juncture which will make it convenient to represent these objects.
\begin{center}
	\begin{tikzpicture}[scale=0.6]
		\draw (-2.5,-0.25) edge[-] (-2.5,1.25);
		\draw (-3.25,0.5) edge[-] (-1.75,0.5);
		\draw (-2.7,0.3) rectangle (-2.3,0.7);
		\node at (-1.25,0.5) {$:=$};
		\draw (0,-0.25) edge[-] (0,1.25);
		\draw (-0.75,0.5) edge[-] (0.75,0.5);
		\draw (0,0.125) circle (0.15cm);
		\draw (0,0.875) circle (0.15cm);
		\draw (-0.375,0.5) circle (0.15cm);
		\draw (0.375,0.5) circle (0.15cm);
		\draw (1.875,0) rectangle (2.875,1);
		\fill[black] (2.175,0.3) rectangle (2.575,0.7);
		\node at (3.5,0.5) {$:=$};
		\draw (4.25,0) rectangle (5.25,1);
		\fill[black] (4.75,0) circle (0.2cm);
		\fill[black] (4.75,1) circle (0.2cm);
		\fill[black] (4.25,0.5) circle (0.2cm);
		\fill[black] (5.25,0.5) circle (0.2cm);
	\end{tikzpicture}
\end{center}
\vspace{-2mm}
Circles on edges represent single-qubit operators on the corresponding qubits -- empty circles represent single-qubit Pauil $\px$ operators and filled circles represent single-qubit Pauli $\pz$ operators.
We will later see multiple circles arranged in some pattern and this corresponds to products of the respective Pauli operators.
Likewise, each local term or \emph{stabilizer} is a product of Pauli operators and is denoted using a square node -- empty square nodes on the vertices represent $\px$ stabilizers, and dark square nodes on plaquettes represent $\pz$ stabilizers. The Hamiltonian is then expressed as the sum of these local terms,
\begin{center}
	\begin{tikzpicture}[scale=0.6]
		\node at (-2.5,0.3) {$\H = -\mathlarger{\sum}_{+}$};
		\draw (-0.5,-0.25) edge[-] (-0.5,1.25);
		\draw (-1.25,0.5) edge[-] (0.25,0.5);
		\draw (-0.7,0.3) rectangle (-0.3,0.7);
		\node at (1,0.3) {$- \mathlarger{\sum}_{\square}$};
		\draw (1.875,0) rectangle (2.875,1);
		\fill[black] (2.175,0.3) rectangle (2.575,0.7);
		\node at (3.5,0.5) {$.$};
	\end{tikzpicture}
\end{center}
\vspace{-2mm}
The first sum is over all the vertices of the lattice and the second sum is over all the faces of the lattice.
The code space is the ground space of $\H$.

\emph{Surface code defects. ---}
The Clifford group is the set of unitary gates which is generated by the Hadamard, phase and $\cnot$ gates.
These gates occupy a special role in the theory of fault tolerance and quantum error correction.
We begin by describing defects on the toric code that are capable of encoding qubits in a manner that facilitates Clifford gates.

Punctures are defects on the surface code that come in one of two varieties, smooth and rough, as shown in fig.\ \ref{fig:basicSmoothRough}.
A smooth puncture is created by measuring $\px$ on the support of a set of $\px$ stabilizer generators whereas a rough puncture is created by measuring $\pz$ on the support of a set of $\pz$ stabilizer generators.
A pair of smooth or rough punctures can be used to encode a logical qubit as shown in fig.\ \ref{fig:encoding} (a).
The logical $\pz$ ($\px$) assigned to a pair of smooth (rough) punctures is a loop of $\pz$'s ($\px$'s) encircling a puncture;
the conjugate logical $\px$ ($\pz$) operator is a chain of $\px$'s ($\pz$'s) between two smooth (rough) punctures.

\emph{Braiding} punctures results in a logical $\cnot$ with the smooth puncture serving as control and the rough puncture serving as target \cite{raussendorf2007fault}.
However, braiding is limited; as such, it maps $\px$ operators to $\px$ operators and $\pz$ operators to $\pz$ operators.
We need to break this restriction to perform a broader class of gates.
\begin{figure}[htp]
	\centering
	\begin{tikzpicture}[rotate=90,scale=.35]
      \draw[step=1cm,gray] (0,-2) grid (7,8);
      \draw (1.5,4) circle (0.2cm);
      \draw (1.5,5) circle (0.2cm);
      \draw (1.5,6) circle (0.2cm);
      \draw (2,3.5) circle (0.2cm);
      \draw (2,4.5) circle (0.2cm);
      \draw (2,5.5) circle (0.2cm);
      \draw (2,6.5) circle (0.2cm);
      \draw (2.5,4) circle (0.2cm);
      \draw (2.5,5) circle (0.2cm);
      \draw (2.5,6) circle (0.2cm);
      \draw (3,3.5) circle (0.2cm);
      \draw (3,4.5) circle (0.2cm);
      \draw (3,5.5) circle (0.2cm);
      \draw (3,6.5) circle (0.2cm);
      \draw (3.5,4) circle (0.2cm);
      \draw (3.5,5) circle (0.2cm);
      \draw (3.5,6) circle (0.2cm);
      \draw (4,3.5) circle (0.2cm);
      \draw (4,4.5) circle (0.2cm);
      \draw (4,5.5) circle (0.2cm);
      \draw (4,6.5) circle (0.2cm);
      \draw (4.5,4) circle (0.2cm);
      \draw (4.5,5) circle (0.2cm);
      \draw (4.5,6) circle (0.2cm);
      \fill[black] (3,1.5) circle (0.2cm);
      \fill[black] (3,0.5) circle (0.2cm);
      \fill[black] (3,-0.5) circle (0.2cm);
      \fill[black] (3.5,2) circle (0.2cm);
      \fill[black] (3.5,1) circle (0.2cm);
      \fill[black] (3.5,0) circle (0.2cm);
      \fill[black] (3.5,-1) circle (0.2cm);
      \fill[black] (4,1.5) circle (0.2cm);
      \fill[black] (4,0.5) circle (0.2cm);
      \fill[black] (4,-0.5) circle (0.2cm);
      \fill[black] (4.5,2) circle (0.2cm);
      \fill[black] (4.5,1) circle (0.2cm);
      \fill[black] (4.5,0) circle (0.2cm);
      \fill[black] (4.5,-1) circle (0.2cm);
      \fill[black] (5,1.5) circle (0.2cm);
      \fill[black] (5,0.5) circle (0.2cm);
      \fill[black] (5,-0.5) circle (0.2cm);
      \fill[black] (5.5,2) circle (0.2cm);
      \fill[black] (5.5,1) circle (0.2cm);
      \fill[black] (5.5,0) circle (0.2cm);
      \fill[black] (5.5,-1) circle (0.2cm);
      \fill[black] (6,1.5) circle (0.2cm);
      \fill[black] (6,0.5) circle (0.2cm);
      \fill[black] (6,-0.5) circle (0.2cm);
      \node at (3.5,-8) {\includegraphics[scale=0.16]{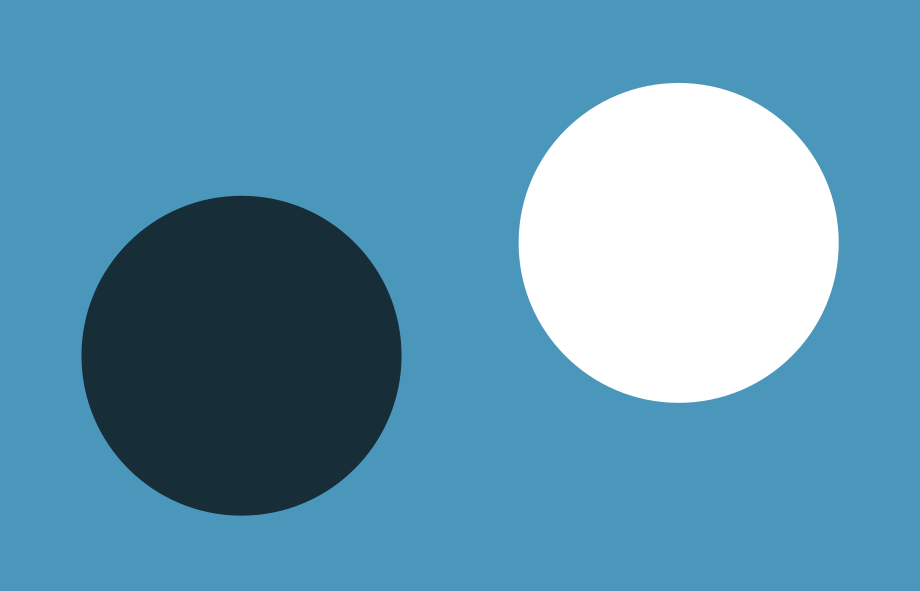}};
  	  \node at (6.1,9) {$(a)$};
	  \node at (6.1,-3.2) {$(b)$};
	\end{tikzpicture}
	\caption{Smooth and rough punctures on a lattice.
	(a) Measurements that serve to create the smooth and rough punctures are denoted using empty and filled circles resp.
	(b) The corresponding lattice-free representation with smooth puncture in black and rough puncture in white.}
	\label{fig:basicSmoothRough}
\end{figure}

Twists are yet another defect on the surface code that address this issue \cite{Bombin10}.
These objects are created by measuring two-qubit operators comprised of one $\px$ and one $\pz$ on adjacent qubits.
The measurement is depicted in fig. \ref{fig:pairMergeAdjacent} (a) by the two circle nodes and the line connecting them.
The individual plaquette and vertex stabilizer generators incident to these qubits anti-commute with this measurement.
This pair of stabilizers is repaced by its product to resolve this frustration.
It is depicted by the line connecting the two square nodes in fig. \ref{fig:pairMergeAdjacent} (a).
\begin{figure}[htp]
	\centering
	\begin{tikzpicture}[scale=0.4]
    \draw (0,0) rectangle (2,2);
    \fill[black] (0.75,0.75) rectangle (1.25,1.25);
    \draw (0,0) edge[-] (4,0);
    \draw (2,2) edge[-] (2,-2);
    \draw (1.75,-0.25) rectangle (2.25,0.25);
    \draw (1,1) edge[-] (2,0);

    \draw (1,0) circle (0.25cm);
    \fill[black] (2,-1) circle (0.25cm);
    \draw (1,0) edge[-] (2,-1);

    \node at (7,0) {\includegraphics[scale=0.3]{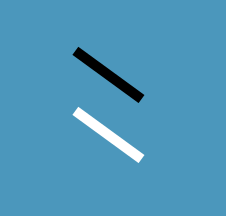}};
    \node at (-1,1.6) {$(a)$};
    \node at (5.5,1.6) {$(b)$};
\end{tikzpicture}
\caption{
(a) Two-qubit measurement indicated by two circle nodes and line connecting them.
Hybrid stabilizer indicated by square nodes and line connecting them.
(b) Lattice-free representation of the two-qubit measurement, indicated by a white line, and the product of stabilizers, indicated by a black line.}
\label{fig:pairMergeAdjacent}
\end{figure}
As shown in fig.\ \ref{fig:twist}, we can perform these two-qubit operators along a line referred to as a defect line; the hybrid stabilizers at either end of this line are called twists.
$\px$ and $\pz$ stabilizer generators across the defect line pair up to form hybrid stabilizers of weight $6$.
These objects can be used to supplement the set of possible operations on punctures -- a smooth puncture that crosses the defect line is transformed into a rough puncture and vice versa.
\begin{figure}[htp]
	\centering
	\begin{tikzpicture}[rotate=90,scale=0.6]
	    \draw[step=1cm,gray] (1,0) grid (4,6);
	    \draw (2.5,4.5) edge[-] (3,4);
	    \fill[black] (2.3,4.3) rectangle (2.7,4.7);
	    \draw (2.8,3.8) rectangle (3.2,4.2);
	    \draw (2.5,3.5) edge[-] (3,3);
	    \fill[black] (2.3,3.3) rectangle (2.7,3.7);
	    \draw (2.8,2.8) rectangle (3.2,3.2);
	    \draw (2.5,2.5) edge[-] (3,2);
	    \fill[black] (2.3,2.3) rectangle (2.7,2.7);
	    \draw (2.8,1.8) rectangle (3.2,2.2);
	    \draw (2.5,1.5) edge[-] (3,1);
	    \fill[black] (2.3,1.3) rectangle (2.7,1.7);
	    \draw (2.8,0.8) rectangle (3.2,1.2);
	    \draw (2.5,4) edge[-] (3,3.5);
	    \draw (2.5,4) circle (0.17cm);
	    \fill[black] (3,3.5) circle (0.17cm);
	    \draw (2.5,3) edge[-] (3,2.5);
	    \draw (2.5,3) circle (0.17cm);
	    \fill[black] (3,2.5) circle (0.17cm);
	    \draw (2.5,2) edge[-] (3,1.5);
	    \draw (2.5,2) circle (0.17cm);
	    \fill[black] (3,1.5) circle (0.17cm);
	    \node at (2.5,-4) {\includegraphics[angle=90,scale=0.13]{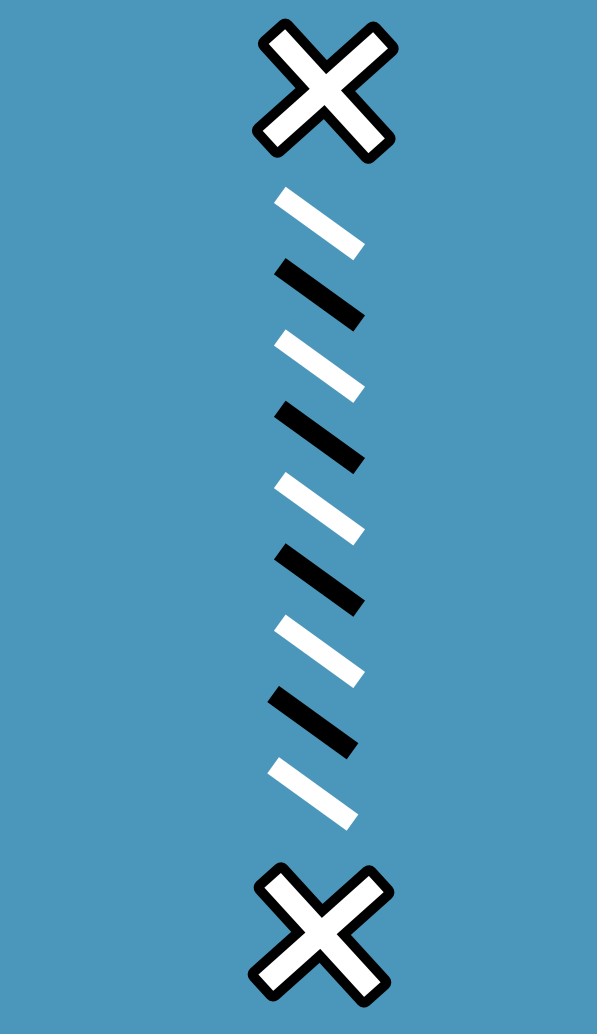}};
	    \node at (2.7,6.5) {$(a)$};
	    \node at (2.7,-0.65) {$(b)$};
	\end{tikzpicture}
	\caption{A twist on a lattice.
	(a) Measure the pairs of qubits using the two-qubit operator $\px \otimes \pz$;
	$\px$ on the horizontal edges and $\pz$ on the vertical edges.
	(b) Lattice-free representation of the twist.
	The twists are marked as white crosses.}
	\label{fig:twist}
\end{figure}
Furthermore, two pairs of twists can be used to encode a logical qubit in their own right as shown in fig.\ \ref{fig:encoding}(b).
The logical $\pz$ is the loop of $\pz$'s encircling a pair of twists and the shared defect line.
The logical $\px$ is a loop that runs between the pair that contains both $\px$ and $\pz$ operators.
We can perform single-qubit Clifford gates on encoded qubits by exchanging twists \cite{Bombin10,Bombin11,yoder2017surface,BrownEtAl17,zheng2015demonstrating}.

\begin{figure}[htp]
	\centering
	\begin{tikzpicture}
		\node at (0,0) {\includegraphics[scale=0.215]{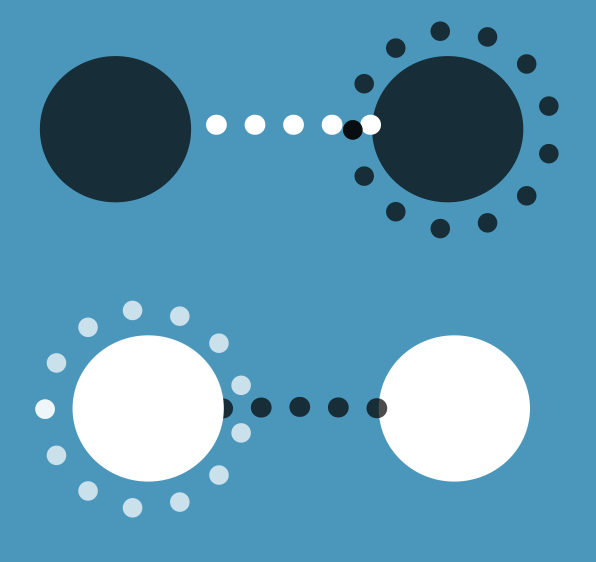}};
		\node at (4.1,0) {\includegraphics[scale=0.12]{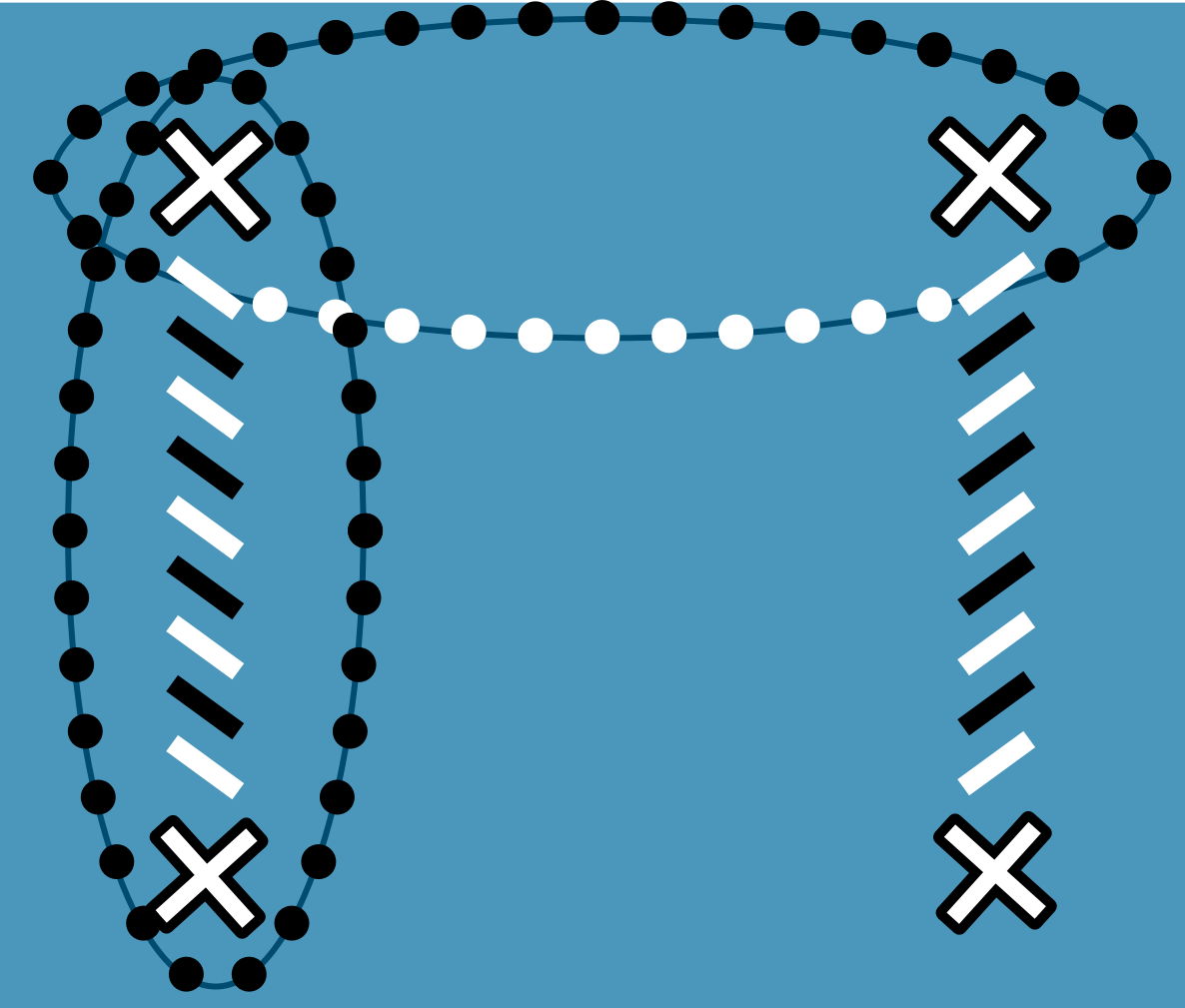}};
		\node at (-1.9,1.3) {$(a)$};
		\node at (2,1.3) {$(b)$};
	\end{tikzpicture}
	\caption{
		(a) Encoding a logical qubit in a pair of smooth or rough punctures.
		(b) Encoding a logical qubit in two pairs of twists.
	}
	\label{fig:encoding}
\end{figure}

\emph{Wormhole. ---} We introduce a new type of defect called a \emph{wormhole} that can be seen as a marriage of puncture and twist defects.
Consider the two-qubit measurement that was used to create a twist but with spatially separated $\px$ (white circle) and $\pz$ (dark circle) operators as shown.
\begin{center}
\begin{tikzpicture}[scale=0.4]
	\draw (0,0) rectangle (2,2);
    \fill[black] (0.75,0.75) rectangle (1.25,1.25);
    \draw (1,0) circle (0.25cm);

    \draw (6,0.5) edge[-] (6,3.5);
    \draw (4.5,2) edge[-] (7.5,2);
    \draw (5.75,1.75) rectangle (6.25,2.25);
    \fill[black] (6,1) circle (0.25cm);
    \draw (1,0) edge[-] (6,1) (1,1) edge[-] (6,2);
\end{tikzpicture}
\end{center}
\vspace{-2mm}
These are measured on the support of a plaquette and vertex stabilizer respectively.
To relove the anti-commutation, we replace these objects by the spatially-separated hybrid stabilizer indicated by the line joining the plaquette and vertex stabilizer generators.

All the hybrid stabilizers are a product of one plaquette and one vertex generator and thus this code remains LDPC.
We can go further by noting that there is no reason to restrict ourselves to measurements along a line.
We can measure two-qubit operators along the boundaries of punctures as shown in fig.\ \ref{fig:deformation}.
This creates two entangled punctures that are spatially separated that we refer to as the mouths of the wormhole.
These new hybrid stabilizers have weight $6$; this is the product of two stabilizers on the boundary, minus their support inside the puncture.
This weight can however be reduced by spreading the weight among some of the local checks adjacent to these stabilizers.

\begin{figure}[htp]
	\centering
	\begin{tikzpicture}[scale=0.4]
      \node at (4.8,3.35) {\includegraphics[scale=.32]{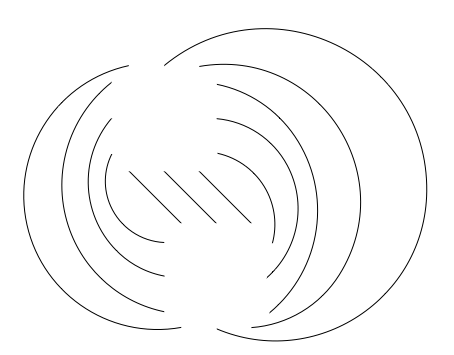}};
      \draw[step=1cm,gray] (0,-2) grid (7,8);
      \draw (1.5,4) circle (0.2cm);
      \draw (1.5,5) circle (0.2cm);
      \draw (1.5,6) circle (0.2cm);
      \draw (2,3.5) circle (0.2cm);
      \fill[black] (2,4.5) circle (0.2cm);
      \fill[black] (2,5.5) circle (0.2cm);
      \draw (2,6.5) circle (0.2cm);
      \fill[black] (2.5,4) circle (0.2cm);
      \fill[black] (2.5,5) circle (0.2cm);
      \fill[black] (2.5,6) circle (0.2cm);
      \draw (3,3.5) circle (0.2cm);
      \fill[black] (3,4.5) circle (0.2cm);
      \fill[black] (3,5.5) circle (0.2cm);
      \draw (3,6.5) circle (0.2cm);
      \fill[black] (3.5,4) circle (0.2cm);
      \fill[black] (3.5,5) circle (0.2cm);
      \fill[black] (3.5,6) circle (0.2cm);
      \draw (4,3.5) circle (0.2cm);
      \fill[black] (4,4.5) circle (0.2cm);
      \fill[black] (4,5.5) circle (0.2cm);
      \draw (4,6.5) circle (0.2cm);
      \draw (4.5,4) circle (0.2cm);
      \draw (4.5,5) circle (0.2cm);
      \draw (4.5,6) circle (0.2cm);
      \fill[black] (3,1.5) circle (0.2cm);
      \fill[black] (3,0.5) circle (0.2cm);
      \fill[black] (3,-0.5) circle (0.2cm);
      \fill[black] (3.5,2) circle (0.2cm);
      \draw (3.5,1) circle (0.2cm);
      \draw (3.5,0) circle (0.2cm);
      \fill[black] (3.5,-1) circle (0.2cm);
      \draw (4,1.5) circle (0.2cm);
      \draw (4,0.5) circle (0.2cm);
      \draw (4,-0.5) circle (0.2cm);
      \fill[black] (4.5,2) circle (0.2cm);
      \draw (4.5,1) circle (0.2cm);
      \draw (4.5,0) circle (0.2cm);
      \fill[black] (4.5,-1) circle (0.2cm);
      \draw (5,1.5) circle (0.2cm);
      \draw (5,0.5) circle (0.2cm);
      \draw (5,-0.5) circle (0.2cm);
      \fill[black] (5.5,2) circle (0.2cm);
      \draw (5.5,1) circle (0.2cm);
      \draw (5.5,0) circle (0.2cm);
      \fill[black] (5.5,-1) circle (0.2cm);
      \fill[black] (6,1.5) circle (0.2cm);
      \fill[black] (6,0.5) circle (0.2cm);
      \fill[black] (6,-0.5) circle (0.2cm);
	\node at (4,-5) {\includegraphics[scale=.32]{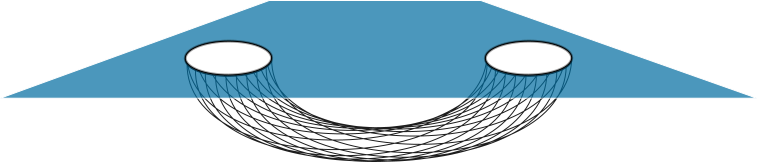}};
	\node at (-1,7) {$(a)$};
	\node at (-1,-3) {$(b)$};
	\end{tikzpicture}
	\caption{Creating a wormhole.
	(a) Measuring two-qubit Pauli operators along the boundary of a puncture.
	b) Side-view of lattice free representation of the wormhole.
		The two white circles represent the mouths of the wormhole.
		The wire mesh underneath the lattice represents the entanglement between these two patches.
		The mesh is merely a visual aid and does not represent an extension of the lattice.}
	\label{fig:deformation}
\end{figure}

The mouths of a wormhole are topologically indistinguishable.
For this reason, we drop the color of the mouths and without loss of generality, depict both mouths in white.
Upon entering one mouth, an anyon emerges via the other mouth with the opposite charge label.
Other types of wormholes are possible that preserve the topological charge of the excitations.
In general, wormhole types correspond to topological domain walls.
In \cite{barkeshli2016modular}, Barkeshli and Freedman enumerate the different boundaries that can be used to transform one type of charge to another.
In contrast, the focus of our work is to understand how to actually construct wormholes, and how to use them for performing gates on LDPC codes.

When a wormhole is created from the vacuum, it is stabilized by a pair of non-local operators shown in fig.\ \ref{fig:stab-logicals} $(a)$ \& $(b)$.
At first glance, it appears that the weight of these stabilizers scales with the size of the puncture.
However, these operators are merely products of the hybrid stabilizers on the boundary.

We can use a wormhole to encode two logical qubits as shown in fig.\ \ref{fig:stab-logicals} $(c)$ \& $(d)$, that we label $1$ and $2$.
We represent the logical $\pz$ operators as a loop of physical $\pz$ operators that encircle one mouth.
The conjugate logical operators are pairs of strings, one of $\px$ type and another of $\pz$ type that run to the mouths.
We assume that the strings terminate at a `sink' wormhole elsewhere on the lattice.


\begin{figure}[htp]
	\centering
	\begin{tikzpicture}
		\node at (0,0) {\includegraphics[width=\columnwidth]{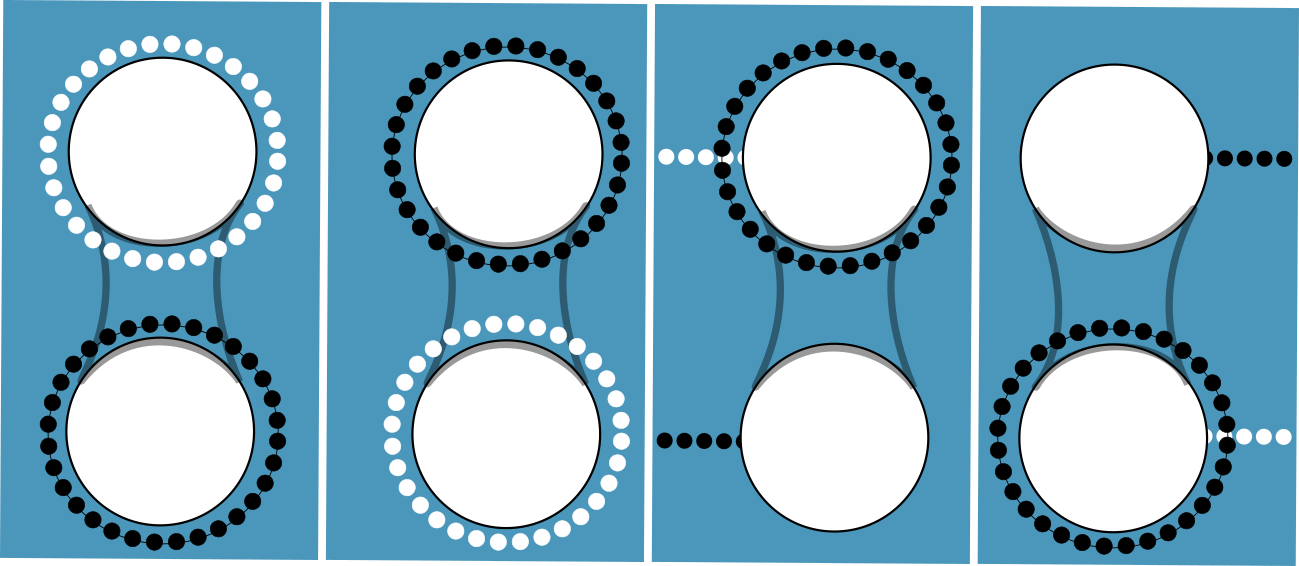}};
		\node at (-4.05,1.6) {$(a)$};
		\node at (-1.85,1.6) {$(b)$};
		\node at (0.3,1.6) {$(c)$};
		\node at (2.5,1.6) {$(d)$};
	\end{tikzpicture}
	\caption{Stabilizer and logical generators of the wormhole.
		$(a), (b)$ represent the stabilizer generators of the wormhole.
		$(c), (d)$ represent the logical operators of each logical qubit.
		The logical $\pz$ is a loop of $\pz$ operators encircling the mouth of a wormhole.
		The logical $\px$ is comprised of two strings that runs between the two mouths of the wormholes, one string of $\pz$ operators and another string of $\px$ operators.
		}
	\label{fig:stab-logicals}
\end{figure}

\emph{Clifford gates. ---}
We now turn our attention to performing Clifford gates on a qubit encoded in a wormhole.
We shall use an ancilla qubit initialized in a wormhole to perform single-qubit Clifford gates.

Suppose we have two qubits, labelled $1$ and $a$, denoting the qubit of interest and the ancilla respectively.
The following lemma summarizes what exactly is needed in order to perform single-qubit Clifford gates on qubit $1$.
\begin{restatable}{lem}{singlequbit}
\label{lem:claim}
	Let $\pa$ and $\pb$ be distinct, non-trivial single-qubit Pauli operators.
	Let $\pp$ and $\pq$ be two Pauli operators, not necessarily distinct.
	The two-qubit measurements $\pa_1\pp_a$ and $\pb_1\pq_a$, together with all single-qubit Pauli measurements on qubit $a$ are sufficient to generate the single-qubit Clifford group on qubit $1$.
\end{restatable}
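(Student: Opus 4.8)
The plan is to realize every single-qubit Clifford on qubit $1$, up to a Pauli byproduct that is recorded classically and tracked in the Pauli frame, using a fixed sequence of the allowed measurements; call a Clifford \emph{implementable} if this can be done. Three observations reduce the task. First, every single-qubit Pauli on qubit $1$ is implementable by the empty sequence, simply declaring the Pauli to be the (constant) recorded byproduct. Second, if $U$ and $V$ are implementable then so is $UV$: run the $V$-gadget, then the $U$-gadget; the accumulated byproduct is $P_U\,(U\,P_V\,U^\dagger)$, again a Pauli because $U$ is Clifford. Hence the implementable Cliffords form a subgroup containing all single-qubit Paulis on qubit $1$. Third, the single-qubit Clifford group modulo its Pauli subgroup and modulo global phase is isomorphic to $S_3$, realized by conjugation as the permutation action on $\{\px,\py,\pz\}$. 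So it suffices to produce two implementable Cliffords whose conjugation actions on $\{\px,\py,\pz\}$ are \emph{distinct} transpositions, since two distinct transpositions generate $S_3$.

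Next I would construct the gadget. We may assume $\pp$ and $\pq$ are non-trivial: if $\pp=\1$ then $\pa_1\pp_a$ is merely a destructive single-qubit measurement on qubit $1$, and one checks the reachable unitaries then lie in the cyclic group generated by a single quarter-rotation of qubit $1$ rather than the full Clifford group (symmetrically for $\pq$). Fix a non-trivial single-qubit Pauli $\pa$ on qubit $1$. Since $\pp$ is non-trivial, exactly two single-qubit Paulis on qubit $a$ anticommute with $\pp_a$; call them $\sansserif{R}$ and $\sansserif{R}'$. The gadget for $\pa$ is: (1) measure $\sansserif{R}_a$, preparing qubit $a$ in an eigenstate of $\sansserif{R}_a$; (2) measure $\pa_1\pp_a$; (3) measure $\sansserif{R}'_a$. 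A direct Heisenberg-picture calculation — tracking $\pa_1$, which is preserved throughout, and one single-qubit Pauli on qubit $1$ anticommuting with it — shows that step (2) replaces the stabilizer $\sansserif{R}_a$ by $\pm\,\pa_1\pp_a$, step (3) replaces that by $\pm\,\sansserif{R}'_a$ (decoupling qubit $a$), and the tracked Pauli is conjugated through so that, net, qubit $1$ undergoes a Pauli byproduct times the Clifford $V_\pa$ which fixes $\pa$ and interchanges the other two single-qubit Paulis on qubit $1$. Thus $V_\pa$ is implementable; running the same gadget with $(\pb,\pq)$ in place of $(\pa,\pp)$ shows $V_\pb$ is implementable.

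Finally, conjugation by $V_\pa$ fixes $\pa$ and transposes the other two elements of $\{\px,\py,\pz\}$, and conjugation by $V_\pb$ transposes the two elements other than $\pb$; since $\pa\neq\pb$ these transpositions are distinct, so they generate $S_3$. Combined with the first observation, the subgroup of implementable Cliffords then contains all Paulis and surjects onto the Clifford group modulo Paulis and global phase, so every single-qubit Clifford on qubit $1$ is realized, up to a tracked Pauli byproduct and an irrelevant global phase, by a sequence of the allowed measurements — which is the lemma. The step requiring the most care is the stabilizer bookkeeping in the gadget: one must verify that for \emph{every} pattern of measurement outcomes, including the random sign of the step-(1) preparation, qubit $a$ genuinely decouples after step (3) and the residual byproduct is supported on qubit $1$ alone, so that the gadgets compose cleanly. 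The non-triviality of $\pp$ and $\pq$ is a genuine and, as noted, necessary hypothesis; beyond that the argument is routine.
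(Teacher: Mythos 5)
Your construction is essentially the paper's own proof: the same three-step gadget (prepare the ancilla in a basis anticommuting with $\pp_a$, measure the joint operator $\pa_1\pp_a$, then measure the ancilla in the other anticommuting basis) yielding the Clifford that fixes $\pa$ and transposes the other two Paulis, with the second joint measurement giving the distinct transposition and the $S_3$ quotient plus Pauli-frame bookkeeping closing the argument. You simply make explicit what the paper compresses into its stabilizer/normalizer flowchart and the phrase ``the result follows'' (composition of gadgets, byproduct tracking, and the non-triviality of $\pp,\pq$ behind the paper's ``without loss of generality'' step), so the proposal is correct and follows the same route.
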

The proof of this statement is presented in the appendix.

In addition to single-qubit Clifford gates, we need one entangling gate to generate the Clifford group.
This can be performed using just $\px$ and $\pz$ measurements, and an ancilla prepared in the $\ket{0}$ state as shown in fig. \ref{fig:cnot}.
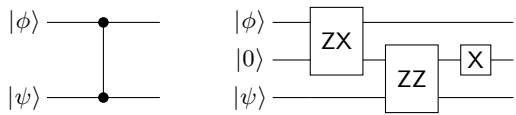
\begin{figure}[h]
	\centering
	\begin{tikzpicture}
		\node at (-3.8,0.5) {$\ket{\psi}$};
		\node at (-3.8,1.5) {$\ket{\phi}$};
		\draw (-3.5,0.5) edge[-] (-2,0.5);
		\draw (-3.5,1.5) edge[-] (-2,1.5);
		\fill[black] (-2.75,0.5) circle (0.07cm);
		\fill[black] (-2.75,1.5) circle (0.07cm);
		\draw (-2.75,0.5) edge[-] (-2.75,1.5);

		\node at (-0.8,0.5) {$\ket{\psi}$};
		\node at (-0.8,1) {$\ket{0}$};
		\node at (-0.8,1.5) {$\ket{\phi}$};
		\draw (-0.5,0.5) edge[-] (0,0.5);
		\draw (-0.5,1) edge[-] (0,1);
		\draw (-0.5,1.5) edge[-] (0,1.5);
		\draw (0,0.8) rectangle (0.7,1.7);
		\node at (0.35,1.25) {{$\pz\px$}};
		\draw (0.7,1) edge[-] (1,1);
		\draw (0,0.5) edge[-] (1,0.5);
		\draw (1,0.3) rectangle (1.7,1.2);
		\node at (1.35,0.75) {{$\pz\pz$}};
		\draw (1.7,1) edge[-] (2,1);
		\draw (2,0.8) rectangle (2.4,1.2);
		\node at (2.2,1) {$\px$};
		\draw (0.7,1.5) edge[-] (2.7,1.5);
		\draw (2.4,1) edge[-] (2.7,1);
		\draw (1.7,0.5) edge[-] (2.7,0.5);
	\end{tikzpicture}
	\caption{A circuit to perform $\cz$ between two single-qubit states $\psi$ and $\phi$ using measurements of Pauli operators.}
	\label{fig:cnot}
\end{figure}

We now need to demonstrate how to perform such a set of operations fault tolerantly.
To this end, we shall use an ancilla encoded in either a pair of smooth or rough punctures.
This ancilla, referred to as the needle, can be used to \emph{stitch} logical operators of interest as we shall demonstrate.
It will therefore not require any more long-range connectivity beyond what is required to initialize the wormholes.

There are different ways to entangle the needle and qubits encoded in the wormhole.
Braiding the needle around one mouth of a wormhole results in the controlled-$\pz$ operation between the needle and an encoded qubit.
The evolution of the logical $\px$ operator of the puncture is shown in fig. \ref{fig:braid-needle}.
\begin{figure}[h]
  \centering
  \begin{tikzpicture}
	\node at (0,0) {\includegraphics[width=\columnwidth]{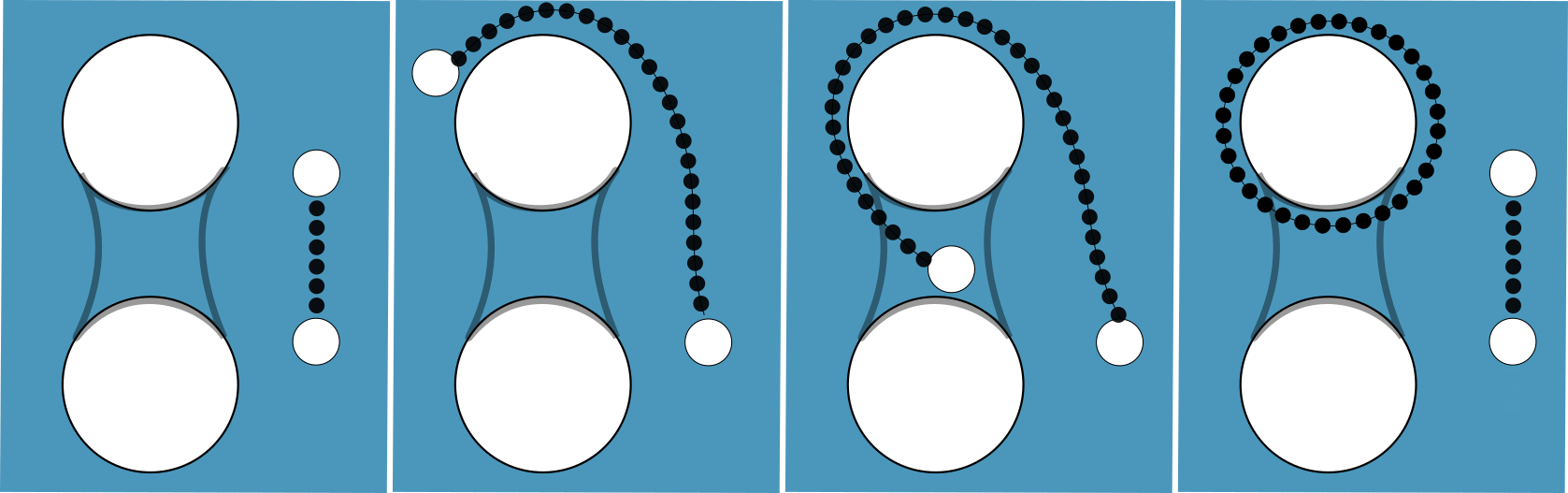}};
	\node at (-4.1,-1.2) {$(a)$};
	\node at (-1.9,-1.2) {$(b)$};
	\node at (0.25,-1.2) {$(c)$};
	\node at (2.4,-1.2) {$(d)$};
\end{tikzpicture}
  \caption{Braiding the needle around the wormhole results in a controlled-$\pz$ operation between the needle and encoded qubit.}
  \label{fig:braid-needle}
\end{figure}

Since the wormhole is traversable, a puncture can enter one mouth of the wormhole and emerge via the other.
We call this operation \emph{stitching}.
Stitching results in the controlled-$\px$ operation between the needle and the encoded qubit.
The evolution of the logical $\px$ of the puncture is shown in fig. \ref{fig:stitch-needle}.
\begin{figure}[h]
	\centering
	\begin{tikzpicture}
		\node at (0,0) {\includegraphics[width=\columnwidth]{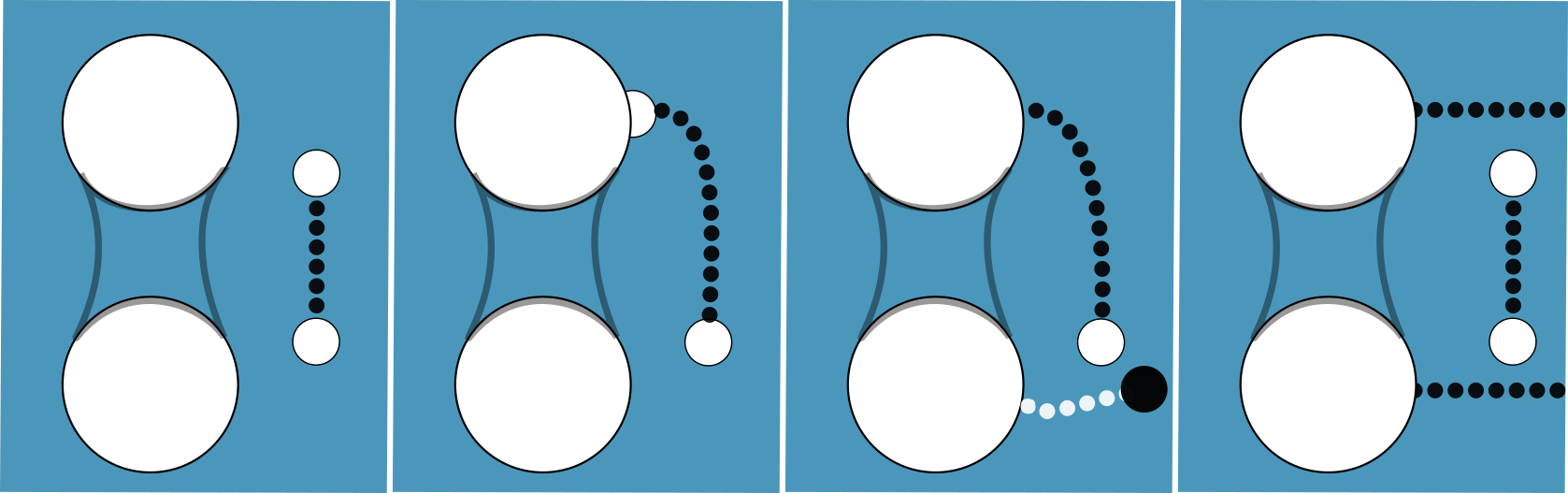}};
		\node at (-4.1,1.1) {$(a)$};
		\node at (-1.9,1.1) {$(b)$};
		\node at (0.25,1.1) {$(c)$};
		\node at (2.4,1.1) {$(d)$};
	\end{tikzpicture}
	\caption{Passing the puncture through the mouth of the wormhole results in the controlled-$\px$ between the needle and encoded qubit.
	Between panels $(c)$ and $(d)$, the puncture goes through the sink wormhole so as to return to the appropriate type.}
	\label{fig:stitch-needle}
\end{figure}

We define certain properties of the logical operators.
First we notice that the needle operators $\px$ and $\pz$ are efficiently preparable and can be measured fault tolerantly.
If we need to measure the string-like operator that runs between punctures for instance, we could make the punctures larger and bring them closer together.
Alternatively to measure the loop-type operator, we can move the punctures apart, make them small and measure the boundary.
The logical $\px$ and $\pz$ operators will thus be referred to as \emph{needle-measurable} operators.
On the other hand, the logical $\py$ operator associated to a puncture is not needle-measurable as this would necessitate shrinking the boundary as well as bringing the punctures close together.
In turn, the measurement would no longer be fault tolerant.

An operator $\pq$ is \emph{traceable} if there exists a way to map a needle-measurable operator $\pp$ to $\pp\pq$.
The logical $\pz$ operator of a puncture was already traceable.
We highlight that by converting a puncture to a wormhole, the logical $\px$ operator is now also traceable.

The final ingredient required to perform logical single-qubit Clifford gates as stipulated by lemma \ref{lem:claim} is a $\py$ measurement.
Unfortunately, logical $\py$ operators of wormholes are not traceable as the $\py$ operator crosses itself.
To be precise, let $\py_w$ denote the logical $\py$ of the wormhole and $\py_n$ be the logical $\py$ of the needle.
In following the path of a logical $\py_w$ associated to a wormhole, we find that it is the logical $\py_n$ operator of the needle that is mapped to $\py_n\py_w$.
Since the logical $\py_n$ of the needle is not needle-measurable, the logical $\py$ of the wormhole is not traceable.

This is remedied with a resource state as follows.
We let a wormhole that can encode $2$ qubits serve as the ancilla.

\begin{restatable}{lem}{catalytic}
\label{lem:catalytic}
	Let the ancilla be comprised of two qubits labelled $a$ and $b$ such that one of its stabilizer generators is $\i_a\py_b$.
	It is possible to apply the measurement $\py_{a}\i_{b}$ on qubit $a$ without affecting the state of the generator $\i_a\py_b$.
\end{restatable}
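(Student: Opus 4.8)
The plan is to replace the desired $\py_a\i_b$ measurement by an equivalent one that can be realised fault tolerantly with the needle, exploiting $\i_a\py_b$ as a catalyst. Since $\i_a\py_b$ is a stabiliser generator, on the relevant code subspace the operator $\py_a\i_b$ acts identically to $\py_a\py_b=(\py_a\i_b)(\i_a\py_b)$, so it suffices to realise a projective measurement of $\py_a\py_b$. Note already that $\py_a\py_b$ commutes with $\i_a\py_b$, so an ideal measurement of $\py_a\py_b$ preserves the eigenvalue of $\i_a\py_b$; the work is to produce a fault-tolerant implementation that does the same.

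First I would establish that, while the logical $\py$ of a single wormhole qubit is not traceable, the operator $\py_a\py_b$ carried by the two qubits of one wormhole \emph{is} traceable. Writing $\py_a\py_b$ as the product of $\py_a$ and $\py_b$, the $\py_a$ factor is traced with the needle by braiding it around one mouth (a controlled-$\pz$, fig.~\ref{fig:braid-needle}) composed with stitching it through that same mouth (a controlled-$\px$, fig.~\ref{fig:stitch-needle}), and the $\py_b$ factor likewise at the other mouth. As recalled in the paragraph preceding the lemma, tracing $\py_a$ alone forces the needle to be dragged both around and through the first mouth, so its worldline self-links that mouth and the needle incurs a non-Pauli twist $T$; this is precisely why $\py_a$ by itself is not traceable, since it maps the \emph{non}-needle-measurable $\py_n$ to $\py_n\py_a$. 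For $\py_a\py_b$, however, the needle self-links the first mouth once and the (topologically indistinguishable) second mouth once, so it undergoes $T^2$, which is a Pauli and hence a harmless frame update. Consequently the composite trace maps the needle-measurable operator $\px_n$ to $\px_n\py_a\py_b$.

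The gadget is then immediate: initialise the needle in the $+1$ eigenstate of $\px_n$; apply the composite trace $U$; measure $\px_n$; apply $U^{\dagger}$. Because $U$ is built from braids and stitches and $U^{\dagger}$ restores every stabiliser, the net effect on the ancilla is a projective measurement of $U^{\dagger}\px_n U=\px_n\py_a\py_b$. With the needle in the $\px_n=+1$ state this reads out $\py_a\py_b$, and with the catalyst in the $\py_b=+1$ eigenspace this is the outcome of $\py_a$ on qubit $a$. Finally, $\px_n\py_a\py_b$ commutes with $\i_a\py_b$, so the measurement leaves that generator's eigenvalue intact, which is exactly the assertion of the lemma.

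The step I expect to be the real obstacle is the middle one: the topological bookkeeping showing that the two self-linkings contribute twists whose product is a Pauli — equivalently, that $\py_a\py_b$ has an even signed self-crossing count at every mouth whereas $\py_a$ has an odd one — so that the composite trace genuinely lands on the needle-measurable $\px_n$ rather than on $\py_n$. A secondary point to verify is that each constituent braid and stitch, and the closing $\px_n$ measurement, is individually fault tolerant, which rests on the needle-measurability of $\px_n$ and on the weight reduction that keeps the hybrid stabilisers LDPC.
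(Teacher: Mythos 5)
Your proposal is correct and follows essentially the same route as the paper: measure the traceable operator $\py_a\py_b$, which is equivalent to $\py_a\i_b$ up to the stabilizer $\i_a\py_b$, and note that it commutes with $\i_a\py_b$ so the catalyst is preserved. The paper's proof is terser --- it simply asserts traceability of $\py_a\py_b$ from the fact that its path does not self-intersect (citing a figure) --- whereas you additionally spell out the braid-and-stitch implementation and the needle measurement gadget, but the underlying argument is the same.
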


The intuition behind this claim is that the operator $\py_{a}\py_{b}$ does not cross itself and hence is traceable.
Assuming such a resource state is provided ahead of time, we can perform catalytic Clifford gates.
This satisfies the requirements for performing single-qubit Clifford gates.
Stitching and braiding have already shown that it is possible to entangle the qubits encoded in a wormhole with those in a puncture.
We can extend this to entangle two qubits encoded in wormholes and create an entangling gate.
This therefore generates all gates in the Clifford group.

\emph{Discussion and conclusion. ---}
We have introduced a new defect on the toric code called a wormhole using entangled measurements along the boundaries of punctures.
These defects are a unified representation of both puncture and twist type defects.
The stabilizers within the interior of the mouths of the wormholes have been removed from the code much like in the case of punctures.
By entangling the boundaries of these punctures, we see interesting physics when we consider the movement of anyons on the surface of the lattice.
Wormholes are capable of encoding a logical qubit and we can perform all gates in the Clifford group using topologically non-trivial operations.
Importantly, wormholes provide a way to perform fault-tolerant gates on a class of quantum LDPC codes called hypergraph product codes.

\emph{Acknowledgements. ---}
We would like to thank Colin Trout for detailed comments on an earlier draft of this work, Baptiste Royer for discussions on deterministic entanglement sharing in superconducting qubit architectures and Ben Criger for discussions on code deformation.
A.K. acknowledges support from the Fonds de recherche du Qu\'ebec - Nature et technologies (FRQNT) via the B2X scholarship for doctoral candidates. David Poulin is a CIFAR Fellow with the Quantum Information Science program.

\bibliographystyle{unsrtabbrev}
\bibliography{references}

\section{Appendix}
\label{sec:appendix}
\subsection*{Proof of lemma \ref{lem:claim}}
We include the proof of lemma \ref{lem:claim} here, which is restated here for completeness.
\singlequbit*
\begin{proof}
	A logical Clifford operation proceeds in three steps.
	Without loss of generality, let $\pp = \pq = \pa$ and consider the measurement of $\pa_1\pp_a$.
	\begin{enumerate}
		\item Initialize qubit $a$ by preparing it in the $\pb$ basis.
		\item Next, perform a joint measurement $\pa_1\pp_a (=\pa_1\pa_a)$ of qubits $1$ and $a$.
		\item Finally, measure qubit $a$ in the basis $\pc (\neq \pa \neq \pb \neq \i)$.
	\end{enumerate}
	The following flowchart tracks the transformation of the generators of the associated stabilizer and normalizer groups, $\S$ and $\N$.
	\begin{align*}
		\begin{matrix}
		\S &=& \{\pb_a\} &\to& \{\pa_1 \pa_a \} &\to& \{\pc_a\} \\
		\N &=& \{\pb_1,\pc_1\} &\to& \{\pb_1\pb_a, \pc_1\pb_a\} &\to& \{\pc_1\pc_a, \pb_1\pc_a\}~.
		\end{matrix}
	\end{align*}
	We have used the fact that Pauli operators are cyclic, i.e. the product of any two distinct operators yields the third (up to a phase).
	Up to stabilizer, the result of this transformation is to map $\pb$ to $\pc$ and vice-versa.
	The result follows.
\end{proof}

\subsection*{Catalytic Clifford gates}
\catalytic*
\begin{proof}
Let $a$ and $b$ refer to the qubits encoded in a wormhole.
However, the product $\py_{a}\py_{b}$ is traceable as shown in fig. \ref{fig:xxyy}.
\begin{figure}[htp]
	\centering
	\begin{tikzpicture}
		\node at (0,0) {\includegraphics[scale=0.3]{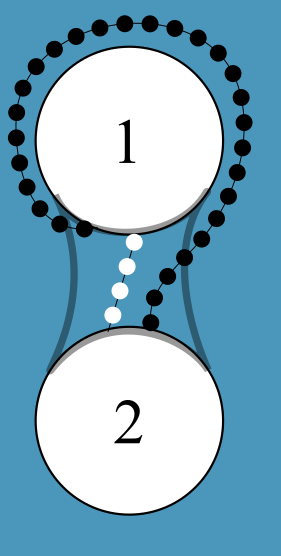}};
	\end{tikzpicture}
	\caption{The product $\py_{a}\py_{b}$ does not cross itself and is therefore traceable.}
	\label{fig:xxyy}
\end{figure}
This is because the operator does not intersect itself.
This can be used to measure $\py_{a}\i_{b}$ by initializing the wormhole in a state such that $\i_{a}\py_{b}$ is a stabilizer generator.
We can then measure $\py_{a}\py_{b}$, which up to action of an element of the stabilizer, is equivalent to $\py_{a}\i_{b}$.
The generator $\i_{a}\py_{b}$ commutes with the measurement and is therefore unaffected.
It can therefore be used for the next gate as well and in this sense, the gate is catalytic.
\end{proof}

\end{document}